\newtheorem{theorem}{Theorem}
\theoremstyle{definition}
\newtheorem{example}{Example}%
\Crefname{corollary}{Corollary}{Corollaries}
\Crefname{lemma}{Lemma}{Lemmas}
\Crefname{figure}{Figure}{Figures}
\Crefname{definition}{Definition}{Definitions}
\Crefname{inequality}{inequality}{inequalities}
\Crefname{example}{Example}{Examples}
\Crefname{proposition}{Proposition}{Propositions}
\Crefname{theorem}{Theorem}{Theorems}
\DeclareMathOperator*{\E}{E}
\DeclareMathOperator*{\argmin}{\arg\,\min}
\newcommand{\given}{\mid}
\newcommand{\altgiven}{;\,}
\newcommand{\thetahat}{\widehat{\theta}}
\newcommand{\rhat}{\widehat{R}}
\newcommand{\Gbar}{\overline{G}}
\begin{document}

\title{Multiple Testing in Generalized Universal Inference}

\author{\name Neil Dey \email ndey3@ncsu.edu\\
\name Ryan Martin \email rgmarti3@ncsu.edu \\
\name Jonathan P. Williams \email jwilli27@ncsu.edu \\
       \addr Department of Statistics\\
       North Carolina State University\\
       Raleigh, NC 27607-6698, USA
}

\maketitle

\begin{abstract}
Compared to p-values, e-values provably guarantee safe, valid inference.  If the goal is to test multiple hypotheses simultaneously, one can construct e-values for each individual test and then use the recently developed e-BH procedure to properly correct for multiplicity.  
Standard e-value constructions, however, require distributional assumptions that may not be justifiable. This paper demonstrates that the generalized universal inference framework can be used along with the e-BH procedure to control frequentist error rates in multiple testing when the quantities of interest are minimizers of risk functions, thereby avoiding the need for distributional assumptions. We demonstrate the validity and power of this approach via a simulation study, testing the significance of a predictor in quantile regression.
\end{abstract}

\begin{keywords}
e-value, empirical risk minimization, false discovery rate, learning rate, quantile regression
\end{keywords}

\section{Introduction}\label{sec:intro}
In scientific studies, it is common to have several hypotheses to be tested simultaneously. This is particularly common in exploratory studies where scientists aim to discover which of many measured covariates impact a response, or in situations where an entire set of inferences must be simultaneously correct to guarantee a correct ``overall'' decision. In such cases, even if many of the null hypotheses are true, the number of null hypotheses that will be rejected due to chance will increase with the number of tests, at any fixed level $\alpha$.  Hence, methods such as the Benjamini--Hochberg procedure \citep{benjamini1995} and the Benjamini--Yekutieli procedure \citep{benjamini2001} have been developed to control the \textit{false discovery rate}---the expected proportion of falsely rejected null hypotheses---when
testing for statistical significance via p-values. 

In recent years, there has been a growing interest in using \textit{e-values} for hypothesis testing. A random variable $W$ is an e-value if $\E(W) \leq 1$, where the expectation is taken with respect to the null hypothesis distribution.  Since the reciprocal of an e-value is a p-value, it can be readily used for uncertainty quantification similarly to p-values. However, e-values possess a variety of advantages over p-values, e.g., they can be more easily combined than p-values, they remain valid under the practice of ``optional continuation," and they enjoy certain robustness properties. For more complete lists of {\em e-over-p} advantages, see \citet[][Sec.~2]{wang2022} and \citet[][Sec.~1.5]{ramdas2024}; see also: \citet{shafer2011}, \citet{chung2023}, \citet{vovk2023}, and \citet{grunwald2024}. The ``e-BH" procedure of \citet{wang2022} modifies the Benjamini--Hochberg procedure---which was designed for a set of \textit{independent} p-values---to be applicable to a general set of \textit{arbitrarily dependent} e-values, controlling the false discovery rate when
testing multiple hypotheses. 

The e-values used in multiple testing applications commonly assume a correctly specified statistical model \citep[e.g.,][]{ignatiadis2023, nguyen2023}. There are a variety of statistics and machine learning applications in which such an approach is not practical: For example, practitioners may be loath to make such distributional assumptions, or perhaps their inferential target is not
the parameter of a data-generating model at all, as in quantile regression \citep{koenker1978} or inference on the minimum clinically important difference \citep{hedayat2015,syring2017}. In such cases, it natural to consider the quantities of inferential interest 
as minimizers of some more general \textit{risk} function. While
uncertainty quantification for risk minimizers has been extensively studied (e.g., via M-estimation \citep{huber1981} or more recent approaches such as \citealt{hudson2021} and \citealt{cella2022}), these approaches do not generally offer finite-sample validity guarantees. 

In this paper, we pair \textit{generalized universal inference} \citep{dey2024} with the e-BH procedure to create e-values that possess finite-sample validity guarantees in a multiple testing problem when the target of inference is the minimizer of a risk function. We then apply this procedure to test for the significance of a predictor in quantile regression, and we investigate our proposed test's power via a simulation study---the code for which is available at \url{https://github.com/neil-dey/gue-multiple-testing}.

\section{Generalized Universal Inference}
An important recent development in statistics is \textit{universal inference} \citep{wasserman2020}, which constructs e-values that can be used for safe or finite-sample valid inference under no regularity conditions. In their proposed framework, one splits the sample $S$ into a training set $S_1$ and a validation set $S_2$ and computes a maximum likelihood estimator $\thetahat_{S_1}$ on $S_1$; then, supposing that the likelihood function $L$ is correctly specified, a $1-\alpha$ level confidence set for the data-generating parameter $\theta^*$ is given by $\{\theta : T(\theta) \leq \alpha^{-1}\}$, where
\begin{equation*}
    T(\theta) := \frac{L(\thetahat_{S_1}\altgiven S_2)}{L(\theta\altgiven S_2)}.
\end{equation*}
Indeed, the universal inference test statistic $T$ is an example of an e-value, thus providing safe inference under no assumptions other than a correctly specified likelihood function.

In certain statistics and machine learning applications, the quantity of interest is not the parameter of a statistical model.  In quantile regression, for example, the true regression coefficients are define to minimize the expected check loss.  Positing a statistical model to solve such a problem creates a risk of model misspecification bias, introduces nuisance parameters, etc., none of which are advantageous.  In such cases, \textit{generalized universal inference} \citep{dey2024} can be used instead to directly construct e-values and achieve finite-sample validity.  In this approach, one measures how well a value $\theta\in\Theta$ of the quantity of interest conforms with an observed datum $z\in\mathcal{Z}$ using a prespecified \textit{loss function} $\ell:\Theta\times\mathcal{Z} \rightarrow \mathbb{R}$, with smaller values of $\ell(\theta\altgiven z)$ indicating a greater degree of conformity between $\theta$ and $z$. Next, define the \textit{risk function} to be the expected value of the loss, $R(\theta) := \E[\ell(\theta\altgiven Z)]$, so that the target of inference is the risk minimizer:
\begin{equation*}
    \theta^* := \arg\min_{\theta\in\Theta}R(\theta).
\end{equation*}
A point estimate for $\theta^*$ given data $Z_1, \ldots, Z_n$ is given by the \textit{empirical risk minimizer} (ERM), $\thetahat_n := \arg\min_{\theta\in\Theta} \rhat_n(\theta)$, 
where $\rhat_n(\theta) := n^{-1}\sum_{i=1}^n \ell(\theta \altgiven Z_i)$ is the \textit{empirical risk}.  The \textit{generalized universal e-value} (GUe-value, pronounced ``gooey value'') is then defined to be
\begin{equation*}
    G_n(\theta) := \exp\qty[-\omega \cdot |S_2|\cdot  \qty(\rhat_{S_2}(\thetahat_1) - \rhat_{S_2}(\theta))],
\end{equation*}
where $S$ (a sample of size $n$) is partitioned into $S_1 \sqcup S_2$, $\thetahat_{S_1}$ is the ERM on $S_1$, $\rhat_{S_2}$ is the empirical risk function on $S_2$, and $\omega > 0$ is a \textit{learning rate} hyperparameter. As shown in \citet{dey2024}, the GUe-value is a valid e-value for sufficiently small $\omega$ if the \textit{strong central condition} holds, i.e., if there exists $\bar{\omega} > 0$ such that
\begin{equation*}
    \E \exp\qty[-\omega\{\ell(\theta\altgiven Z) - \ell(\theta^*\altgiven Z)\}] \le 1 \quad \text{for all } \theta\in\Theta \text{ and } \omega \in [0, \bar{\omega}).
\end{equation*}
In this case, Theorem~1 of \citet{dey2024} implies that $\{\theta : G_n(\theta) < \alpha^{-1}\}$ is a $1-\alpha$ level confidence set for $\theta^*$, and that the test that rejects $H_0: \theta^* \in\Theta_0$ 
if and only if $
G_n(\Theta_0) := \inf_{\theta\in\Theta_0} G_n(\theta) \geq \alpha^{-1}$, controls the Type I error at level $\alpha$. 

Despite its name, the strong central condition is not particularly restrictive in most applications; see Section 3 of \citet{dey2024} for further discussion, as well as \citet{vanErvan2015}, \citet{heide2020}, and \citet{grunwald2020}.  It is true, however, that the validity and efficiency of the GUe-value hinges on the choice of the learning rate, so it is suggested in \citet[][Sec.~4]{dey2024} to apply learning rate selection algorithms developed for Gibbs posteriors to choose learning rates for GUe-values (e.g., those proposed in \citealt{bissiri2016}, \citealt{holmes2017}, \citealt{grunwald2017}, and \citealt{lyddon2019}).  The authors specifically suggest use of a bootstrap-based algorithm inspired by the ``General Posterior Calibration'' algorithm \citep{syring2019, martin2022} to select the necessary learning rate, as this algorithm empirically maintains frequentist validity at finite sample sizes.  Finally, if the loss function is precisely the negative log-likelihood function for the data, then generalized universal inference is universal inference, and the strong central condition holds with $\bar{\omega} = 1$.

\section{Generalized Universal Multiple Testing}
There are two situations of interest: First, the situation in which several GUe-values are of standalone interest, and we require control of the false discovery rate (FDR); second, the situation in which several GUe-values are combined into a single e-value to determine if \textit{any} of the tested null hypotheses are false. In this section, we demonstrate that the e-BH procedure of \citet{wang2022} can be used to accomplish both of these goals using GUe-values.

In the e-BH procedure, we are given e-values $e_1, \ldots e_M$ corresponding to tests of $M$ many null hypotheses, $H_0^m: \theta_m \in \Theta_0^m$ for $m=1,\ldots,M$. Assuming these e-values have been sorted from largest to smallest, we consider the transformed e-values ${e^*_m := m\cdot e_m/M}$.
\citet{wang2022} prove that these new e-values $e^*_1, \ldots, e^*_M$ control the FDR under arbitrary dependence between the original e-values. Consequently, so long as the strong central condition holds, one can simply apply this e-BH procedure to a set of GUe-values to control the FDR at a prespecified level. 

For our second goal of combining GUe-values into a single hypothesis test to determine if any of the original null hypotheses are false, we may merge the e-BH, FDR-controlled e-values using any of the strategies explored in \citet{vovk2021}. We find the most useful merging function is simply the average of the e-values because this can be shown to ``essentially dominate'' any other symmetric merging function. We thus see that given the (sorted) GUe-values $\{G_{n_m}^{(m)}(\Theta_0^m): m=1,\ldots,M\}$, the final GUe-value
we propose for this multiple testing situation is
\begin{equation}\label{eq:finalGue}
\Gbar_M(\Theta_0^1,\ldots,\Theta_0^M) := \frac{1}{M}\sum_{m=1}^M \frac{m}{M}\cdot G_{n_m}^{(m)}(\Theta_0^m).
\end{equation}

\begin{theorem}\label{thm:size}
Let $\{G_{n_m}^{(m)}(\Theta_0^{(m)}): m=1,\ldots,M\}$ be sorted GUe-values, where each GUe-value uses a learning rate that satisfies the strong central condition. If all of $H_0^{(1)},\ldots,H_0^{(M)}$ are true, then $\Gbar_M$ in \eqref{eq:finalGue} satisfies 
\begin{equation*}
    \Pr( \Gbar_M(\Theta_0^1,\ldots,\Theta_0^M) \geq \alpha^{-1}) 
    \leq \alpha.
\end{equation*}
\end{theorem}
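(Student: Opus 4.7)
The plan is to bound $\Gbar_M$ pointwise by the ordinary (unweighted) average of the individual GUe-values and then apply Markov's inequality, which is much more direct than invoking the e-BH or merging machinery in full generality. First, I would verify that each $G_{n_m}^{(m)}(\Theta_0^{(m)})$ is a genuine e-value under $H_0^{(m)}$: since the learning rate used in the $m$-th GUe-value satisfies the strong central condition, Theorem~1 of \citet{dey2024} yields $\E[G_{n_m}^{(m)}(\theta_m^*)] \le 1$, where $\theta_m^*$ is the risk minimizer for the $m$-th problem. Under $H_0^{(m)}$ we have $\theta_m^*\in\Theta_0^{(m)}$, so $G_{n_m}^{(m)}(\Theta_0^{(m)}) = \inf_{\theta\in\Theta_0^{(m)}}G_{n_m}^{(m)}(\theta) \le G_{n_m}^{(m)}(\theta_m^*)$ pointwise, and hence $\E[G_{n_m}^{(m)}(\Theta_0^{(m)})]\le 1$ holds under the joint null.

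Second, I would observe that the sort-and-reweight step is pointwise wasteful: because every weight satisfies $m/M\le 1$ and every GUe-value is non-negative,
\begin{equation*}
\Gbar_M(\Theta_0^1,\ldots,\Theta_0^M) = \frac{1}{M}\sum_{m=1}^M \frac{m}{M}\, G_{n_m}^{(m)}(\Theta_0^{(m)}) \le \frac{1}{M}\sum_{m=1}^M G_{n_m}^{(m)}(\Theta_0^{(m)}).
\end{equation*}
The right-hand side is a sum over the multiset of GUe-values, and so it is invariant under the (data-dependent) permutation used to sort the left-hand side; it is, in particular, just the simple average of the $M$ unsorted GUe-values.

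Taking expectations in the displayed inequality, using linearity, and combining with the first step gives $\E[\Gbar_M]\le \frac{1}{M}\sum_{m=1}^M \E[G_{n_m}^{(m)}(\Theta_0^{(m)})]\le 1$. Since $\Gbar_M\ge 0$, Markov's inequality then yields
\begin{equation*}
\Pr\bigl(\Gbar_M(\Theta_0^1,\ldots,\Theta_0^M)\ge \alpha^{-1}\bigr) \le \alpha\cdot \E[\Gbar_M] \le \alpha,
\end{equation*}
which is the claim.

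The only delicate point I see is that the sorting permutation is data-dependent, which could in principle spoil an expectation-based argument applied to the individual terms $\frac{m}{M}G_{n_m}^{(m)}(\Theta_0^{(m)})$; but because the key inequality is pointwise and the dominating quantity is permutation-invariant, linearity of expectation handles this automatically. There is therefore no substantive obstacle, and the proof does not require the full rearrangement inequality or the Vovk--Wang merging theory.
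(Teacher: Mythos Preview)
Your argument is correct and in fact more explicit than what the paper provides: the paper's own proof is a one-line deferral to Theorem~2 of \citet{wang2022}, whereas you give a self-contained derivation. The substantive difference is that you bypass the e-BH and merging machinery altogether by observing that the weights $m/M$ are all at most $1$, so $\Gbar_M$ is pointwise dominated by the simple average $\frac{1}{M}\sum_m G_{n_m}^{(m)}(\Theta_0^{(m)})$; since that sum is permutation-invariant, the data-dependent sorting is irrelevant, and linearity of expectation together with the e-value property of each summand (via Theorem~1 of \citealt{dey2024}) gives $\E[\Gbar_M]\le 1$, after which Markov's inequality yields the bound. Your treatment of the sorting issue is exactly right. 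A pleasant by-product of your route is that it makes transparent that the $m/M$ reweighting is strictly wasteful for the size guarantee in Theorem~\ref{thm:size}: the unweighted average would be a larger, still-valid e-value. The e-BH transformation is therefore doing real work only for the FDR-control portion of the paper, not for this particular result.
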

\begin{proof}
Virtually identical to the proof of Theorem 2 of \citet{wang2022}.
\end{proof}

\begin{theorem}\label{thm:power}
Let $\{G_{n}^{(m)}(\Theta_0^{(m)}): m=1,\ldots,M\}$ be the sorted GUe-values where, for simplicity, each test is based on a sample of size $n$.  Suppose that $\sup_{\vartheta\in\Theta^m}|\rhat^{(m)}_n(\vartheta) - R^{(m)}(\vartheta)| = o_p(1)$ as $n \to \infty$ for each $m=1,\ldots,M$.  If there exists $m$ such that $H_0^{(m)}$ is false, i.e., $\inf_{\vartheta \in \Theta_0^m} R^{(m)}(\vartheta) > R^{(m)}(\theta_m^*)$, 
then $\Gbar_M$ in \eqref{eq:finalGue} satisfies 
\begin{equation*}
    \lim_{n\rightarrow\infty}\Pr( \Gbar_M(\Theta_0^1,\ldots,\Theta_0^M)  
    \geq \alpha^{-1})
     = 1.
\end{equation*}
\end{theorem}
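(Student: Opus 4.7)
Fix an index $m^\star$ for which $H_0^{(m^\star)}$ is false, and set $\delta := \inf_{\vartheta \in \Theta_0^{m^\star}} R^{(m^\star)}(\vartheta) - R^{(m^\star)}(\theta_{m^\star}^*) > 0$. The plan is to first show that a single GUe-value $G_n^{(m^\star)}(\Theta_0^{m^\star})$ diverges to $+\infty$ in probability, and then argue that one diverging term is already enough to drive $\Gbar_M$ to infinity.

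For the first step, I would write
\begin{equation*}
G_n^{(m^\star)}(\Theta_0^{m^\star}) = \exp\qty[\omega |S_2|\qty{\inf_{\vartheta \in \Theta_0^{m^\star}} \rhat_{S_2}^{(m^\star)}(\vartheta) - \rhat_{S_2}^{(m^\star)}(\thetahat_{S_1}^{(m^\star)})}]
\end{equation*}
and study the quantity in braces. The uniform convergence hypothesis on $S_2$, combined with the elementary inequality $|\inf_\vartheta f(\vartheta) - \inf_\vartheta g(\vartheta)| \leq \sup_\vartheta|f(\vartheta) - g(\vartheta)|$, gives $\inf_{\vartheta \in \Theta_0^{m^\star}} \rhat_{S_2}^{(m^\star)}(\vartheta) = \inf_{\vartheta \in \Theta_0^{m^\star}} R^{(m^\star)}(\vartheta) + o_p(1)$. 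For the subtracted term I would use the standard ERM risk-consistency sandwich, namely $0 \leq R^{(m^\star)}(\thetahat_{S_1}^{(m^\star)}) - R^{(m^\star)}(\theta_{m^\star}^*) \leq 2\sup_\vartheta|\rhat_{S_1}^{(m^\star)}(\vartheta) - R^{(m^\star)}(\vartheta)| = o_p(1)$, and then chain this with another application of uniform convergence on $S_2$ to conclude $\rhat_{S_2}^{(m^\star)}(\thetahat_{S_1}^{(m^\star)}) = R^{(m^\star)}(\theta_{m^\star}^*) + o_p(1)$. The braced quantity is therefore equal to $\delta + o_p(1)$, and multiplying by $\omega|S_2| \to \infty$ forces the exponent, and hence $G_n^{(m^\star)}(\Theta_0^{m^\star})$, to diverge to $+\infty$ in probability.

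For the second step, I would exploit the nonnegativity of every GUe-value. Writing $G^{(k)}$ for the $k$-th largest among $\{G_n^{(m)}(\Theta_0^m)\}_{m=1}^M$ and noting that each coefficient $k/M^2$ in \eqref{eq:finalGue} is at least $1/M^2$, we get
\begin{equation*}
\Gbar_M(\Theta_0^1,\ldots,\Theta_0^M) = \frac{1}{M^2}\sum_{k=1}^M k\,G^{(k)} \geq \frac{1}{M^2}\sum_{m=1}^M G_n^{(m)}(\Theta_0^m) \geq \frac{1}{M^2}\,G_n^{(m^\star)}(\Theta_0^{m^\star}),
\end{equation*}
where the first inequality uses $k \geq 1$ and the second uses nonnegativity of the remaining terms. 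The right-hand side tends to $+\infty$ in probability by the first step, so $\Pr(\Gbar_M \geq \alpha^{-1}) \to 1$.

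The only delicate piece of bookkeeping is in the first step: one must keep careful track of which empirical risk ($\rhat_{S_1}$ or $\rhat_{S_2}$) is being evaluated where, since ERM consistency is a statement about $\rhat_{S_1}$ at $\thetahat_{S_1}$, while the exponent of the GUe-value evaluates $\rhat_{S_2}$ at $\thetahat_{S_1}$. Once that chaining is done, the rest of the argument reduces to monotonicity of the exponential and the trivial observation that a sum of nonnegative terms dominates each of its summands.
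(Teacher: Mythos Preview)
Your proposal is correct and follows essentially the same approach as the paper: lower-bound $\Gbar_M$ by a constant times a single GUe-value corresponding to a false null, and then show that this GUe-value diverges. The only difference is that the paper delegates the divergence step to Theorem~2 of \citet{dey2024} as a black box, whereas you unpack that step directly via the uniform-convergence and ERM-consistency argument you sketch.
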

\begin{proof}
For any $m$, it is clear that $\Gbar_M(\Theta_0^1,\ldots,\Theta_0^M) \geq mM^{-2} \, G_n^{(m)}(\Theta_0^m)$. Consequently, 
\[ \Pr( \Gbar_M(\Theta_0^1,\ldots,\Theta_0^M) \geq \alpha^{-1} ) \geq \Pr( \frac{m}{M^2} \, G_n^{(m)}(\Theta_0^m) \geq \alpha^{-1} ) = \Pr( G_n^{(m)}(\Theta_0^m) \geq \qty(\frac{m\alpha}{M^2})^{-1}). \]
If $m$ is such that $H_0^m$ is false in the sense above, and the stated assumptions hold, then Theorem 2 of \citet{dey2024} applies and the right-hand side above converges to 1 as $n \to \infty$, since $m\alpha M^{-2} \in (0,1)$ is fixed. 
\end{proof}

\Cref{thm:size,thm:power} state that (a)~if all null hypotheses are true, then we identify this fact with probability $1-\alpha$, and (b)~if at least one null hypothesis is false, then we identify this fact with arbitrarily high probability, for sufficiently large sample sizes. While \Cref{thm:power} does not require that \textit{any} of the GUe-values satisfy the strong central condition, we caution against using powerful tests that do not provide finite-sample validity guarantees.

\section{Simulations}
An interesting example where multiple testing in a risk minimization problem arises organically is in quantile regression.  In order to determine if some covariate $X$ has an impact on a response $Y$, it is insufficient to test just the mean or median response---consider, e.g., data generated as $Y\mid X \sim N(0, X^2)$ for scalar $X$---so one must generally test multiple quantiles $\tau_1, \ldots, \tau_M \in (0, 1)$ and determine if any of the $\tau_m$-specific regression coefficients are non-zero. Note that the $\tau$-quantile regression problem is induced by the loss function
\begin{equation*}
    \ell_\tau(\theta\altgiven X, Y) :=  (Y - X^\top \theta)\cdot \bigl\{ \tau - \mathbbm{1}(Y - X^\top \theta < 0) \bigr\}.
\end{equation*}
That is, the estimated coefficient vector $\thetahat^{(\tau_m)}$ for $\tau_m$-quantile regression is given by $\break\argmin_\theta n^{-1} \sum_{i=1}^n \ell_{\tau_m}(\theta\altgiven X_i, Y_i)$, and our true quantities of interest, $\theta^{(\tau_m)}$, are risk minimizers $\arg\min_\theta \E[\ell_{\tau_m}(\theta\altgiven X, Y)]$ for $m=1,\ldots,M$.  If the goal is to test for the significance of covariate, say, $X_1$, in the $\tau_m$-specific quantile regression, then that corresponds to a composite null hypothesis $H_0^m: \theta^{(\tau_m)} \in \Theta_0^m := \{\theta \in \mathbb{R}^p: \theta_1=0\}$; this is a composite null hypothesis because only the value of the $X_1$ coefficient is set to 0, the others (including the intercept) are free to vary.  For each $m$, we can construct a GUe-value $G_n^{(m)}(\Theta_0^m)$ to test $H_0^m$, but the relevant scientific question is not about a quantile-specific coefficient.  Instead, the question concerns {\em all} quantiles---or at least all of our $M$-many quantiles for large $M$---so this naturally turns into a multiple testing problem.  Our proposal is to combine the quantile-specific GUe-values into $\Gbar_M$ according to \eqref{eq:finalGue} and then reject the meta-hypothesis ``$X_1$ is not significant'' at level $\alpha \in (0,1)$ if and only if $\Gbar_M \geq \alpha^{-1}$. 


We now provide two simulations demonstrating the efficacy of the GUe-value in this quantile regression problem. We note that in the following examples, the strong central condition easily holds because the data are bounded.  We also use the suggested Algorithm 1 of \citet{dey2024} to select the learning rate for each GUe-value. 
\begin{figure}[t]
    \centering
    \begin{subfigure}{0.4\textwidth}
            \includegraphics[width=\textwidth]{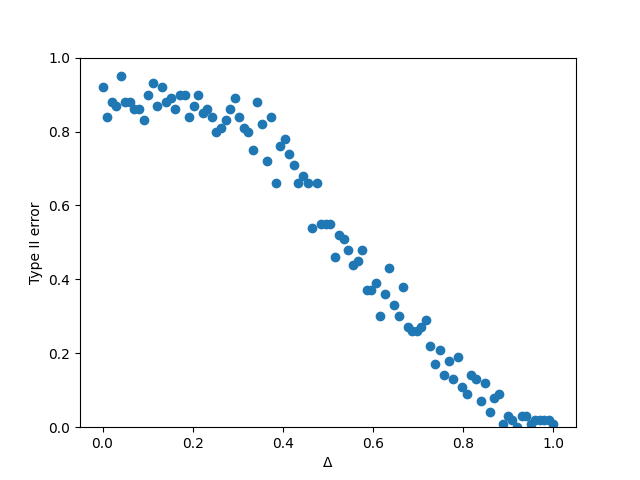}
            \caption{}\label{fig:triangle_delta_power}
    \end{subfigure}
    \begin{subfigure}{0.4\textwidth}
            \includegraphics[width=\textwidth]{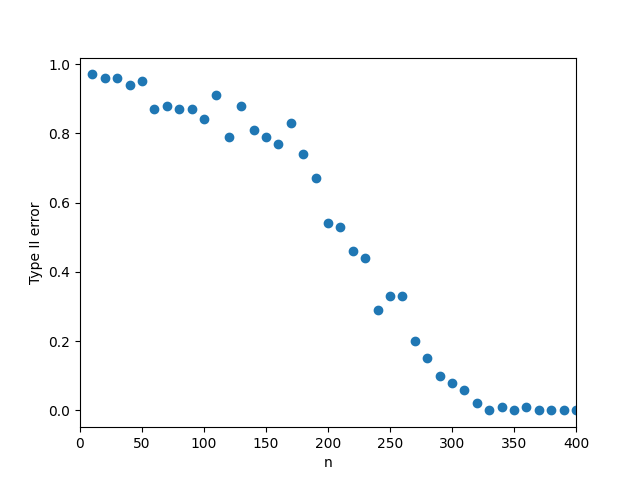}
            \caption{}\label{fig:triangle_n_power}
    \end{subfigure}
    \caption{Type II error rate of the combined GUe-value in \Cref{ex:triangle}, estimated using 100 Monte Carlo iterations. In (a), a sample size of 50 is held constant; in (b), $\Delta = 0.3$ is held constant.}\label{fig:trianglepower}
\end{figure}
\begin{example}\label{ex:triangle}
A simple example considers the following family of distributions on $[0, 1]^2$ parameterized by $\Delta\in[0, 1]$:
\begin{equation*}
    Y\given X \sim \operatorname{Uniform}\qty(\tfrac{X \Delta}{2}, 1 - \tfrac{X\Delta}{2}), \qquad X \sim \operatorname{Uniform}(0, 1)
\end{equation*}
At $\Delta = 0$, all of the $\tau$-specific regression coefficients are 0, whereas, if $\Delta > 0$, then only the $\tau=\frac12$ coefficient is 0.  Furthermore, as $\Delta$ increases, the ``signal" of a nonzero coefficient becomes easier to detect. For a given $\Delta$, we perform $\tau$-quantile regression on a sample of size $n=50$ for each $\tau=\tau_m$ in $\{0.02, 0.04, \ldots, 0.98\}$, obtain the $M=49$ many GUe-values for testing if $\theta_1^{(\tau_m)}=0$ on the sample, combine them to get $\Gbar_M$, and carry out the overall significance test at level $\alpha = 0.1$.  
The Type II error rate for different choices of $\Delta$ is plotted in \Cref{fig:triangle_delta_power}. We see that near $\Delta = 0$, the Type II error rate is approximately $1-\alpha$, as one would expect, but the error rate quickly diminishes as it gets ``easier" to identify non-zero slopes in the quantile regressions. We can similarly examine how the Type II error decreases as the sample size increases; this is shown in \Cref{fig:triangle_n_power}.
We finally note that at $\Delta = 0$, the empirical FDR for the GUe-values before being merged was 0.04---about half the allowed limit of $0.10$.
\end{example}

\begin{example}\label{ex:trapezoid}
Next, consider the following family of distributions parameterized by $\Gamma\in[0, 1]$:
\begin{equation*}
    Y\mid X\sim \operatorname{Uniform}(0, 1 + \Gamma\cdot (X-1)), \qquad X \sim \operatorname{Uniform}(0, 1)
\end{equation*}
As before, this family has the property that for $\Gamma \approx 0$, all signals are weak, whereas the signals for $\Gamma \approx 1$ are strong. Compared to \Cref{ex:triangle}, there are ``fewer" quantiles with large slope coefficients, but they tend to be larger in magnitude and, therefore, easier to detect. In terms of power as $\Gamma$ and $n$ increases in \Cref{fig:trapezoid_power}, here we find that the GUe test is generally more powerful than in \Cref{ex:triangle}, with the Type~II error rate dropping off more quickly as either $\Gamma$ or $n$ increases. This suggests that our testing procedure does better at detecting a few strong signals as opposed to multiple weaker signals. Finally, we note that the empirical FDR at $\Gamma = 0$ remains at $0.04$ in this case. 
\begin{figure}[t]
    \centering
    \begin{subfigure}{0.4\textwidth}
            \includegraphics[width=\textwidth]{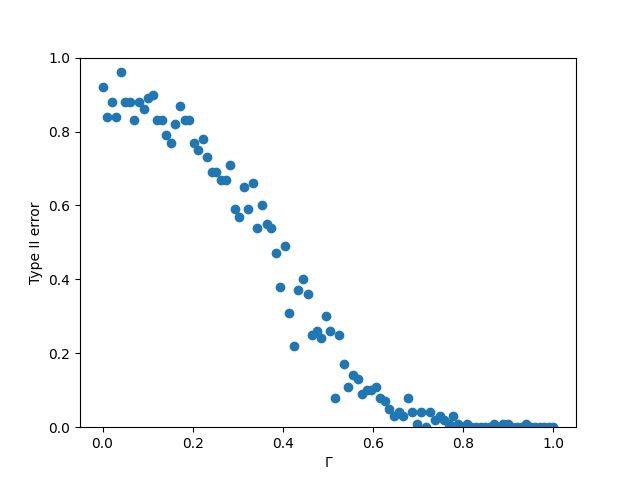}
            \caption{}\label{fig:trapezoid_delta_power}
    \end{subfigure}
    \begin{subfigure}{0.4\textwidth}
            \includegraphics[width=\textwidth]{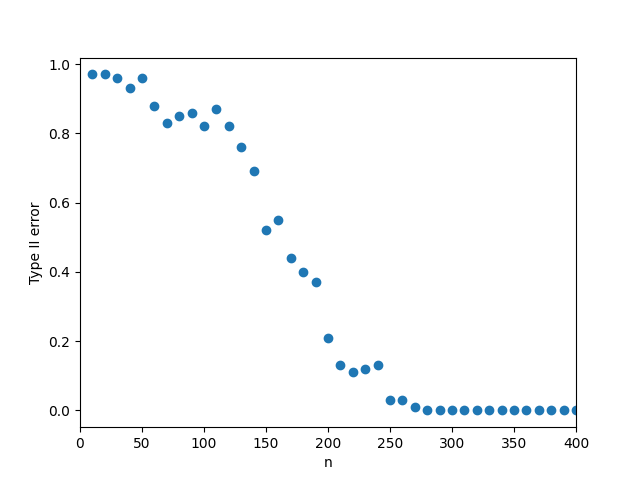}
            \caption{}\label{fig:trapezoid_n_power}
    \end{subfigure}
    \caption{Type II error rate of the combined GUe-value in \Cref{ex:trapezoid}, estimated using 100 Monte Carlo iterations. In (a), a sample size of 50 is held constant; in (b), $\Gamma = 0.3$ is held constant.}\label{fig:trapezoid_power}
\end{figure}
\end{example}

\begin{example}
It is also of interest to consider how the learning rate for the GUe-value is selected for various quantiles and data-generating distributions---independent of their use in hypothesis testing. Thus, we plot in \Cref{fig:lrs} the learning rates selected for the GUe-values for each $\tau$ for the data-generating distributions from \Cref{ex:triangle,ex:trapezoid}. The plots suggest that higher learning rates are chosen at more extreme quantiles; that this holds even when $\Delta$ and $\Gamma$ are zero---where there is no effect regardless of quantile---demonstrates that the choice of learning rate depends on the loss function $\ell_\tau$. However, the learning rate is not solely dependent on the loss function, but also on the ``ground truth" data-generating distribution. In particular, there is a stark increase in the learning rate as it becomes ``more obvious" that the null hypotheses should be rejected (i.e., as $\Delta$ and $\Gamma$ increase). This also is to be expected, as a larger learning rate tends to produce larger GUe-values which are more likely to reject the null hypothesis. We finally note that the learning rates chosen in \Cref{fig:trapezoid_lr} appear to be somewhat smaller than those chosen in \Cref{fig:triangle_lr} despite the hypothesis test having better power; this suggests that simply looking at the magnitude of the learning rate is not a particularly useful metric to analyze the Type II error rate when the data-generating family is not held constant. 
\begin{figure}[t]
    \centering
    \begin{subfigure}{0.4\textwidth}
        \includegraphics[width=\textwidth]{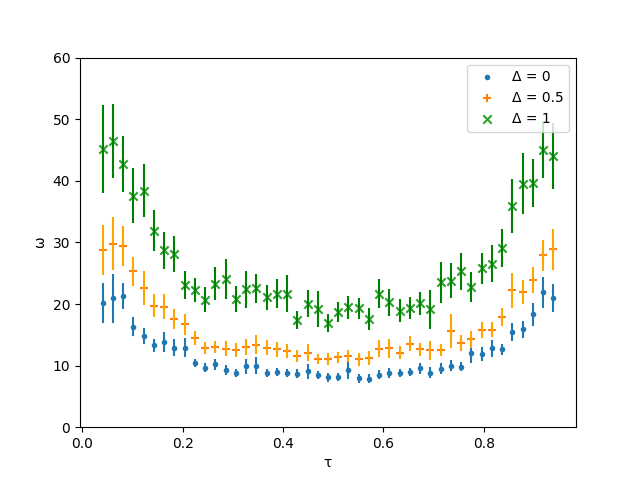}
        \caption{}\label{fig:triangle_lr}
    \end{subfigure}
    \begin{subfigure}{0.4\textwidth}
        \includegraphics[width=\textwidth]{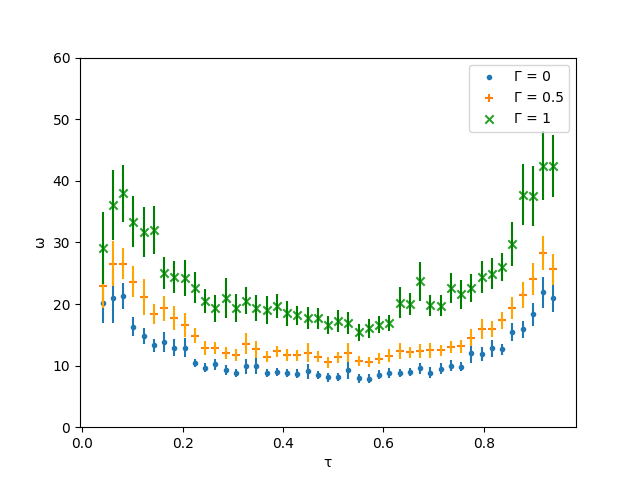}
        \caption{}\label{fig:trapezoid_lr}
    \end{subfigure}
    \caption{The average learning rate (over 100 Monte Carlo samples) chosen by Algorithm 1 of \citet{dey2024} at different $\tau$ values. Error bars are of width 1 standard error.}
    \label{fig:lrs}
\end{figure}
\end{example}

\section{Conclusion}
In this paper, we demonstrated that the GUe-value from the generalized universal inference framework is well-equipped to be used within the e-BH procedure to test multiple hypotheses regarding the locations of risk minimizers. We provided both theory and simulation results that show that the GUe-value retains the validity and power properties that it possesses for testing a single hypothesis. We also demonstrated the utility of multiple testing of risk minimizers in the specific problem of determining, in the absence of knowledge of the data-generating distribution, if there exists a linear association between two variables at any conditional quantile of the response.

There are a number of directions that still need to be explored. For example, \Cref{thm:power} is fairly weak; it is likely that more can be said regarding how the number of tests being carried out as well as the size of the signal impacts the growth rate (and hence the power) of the combined GUe-value. Additionally, the particular quantile regression testing problem we posed also leaves open several avenues for future work. For one, our simulations fixed fifty evenly spaced quantiles, and it would be interesting to explore how sensitive the GUe-value is to the locations and total number of tested quantiles. Furthermore, simply testing a finite number of quantiles does not fully address the problem: Although the strong dependence between GUe-values of neighboring quantiles means that the grid of tested quantiles likely does not need to be particularly dense, a full solution would require simultaneous testing of \textit{all} quantiles in the interval $[0, 1]$. A compromise between a fixed finite number of quantiles and an entire continuum of them would be to consider letting the number of quantiles grow with the sample size; in this setting, it may make sense to consider using the ``online" version of the GUe-value from \citet{dey2024}.  The exact properties of FDR-controlled and merged online GUe-values should be investigated in future work. 

\vskip 0.2in
\bibliography{references.bib}

\begin{thebibliography}{28}
\providecommand{\natexlab}[1]{#1}
\providecommand{\url}[1]{\texttt{#1}}
\expandafter\ifx\csname urlstyle\endcsname\relax
  \providecommand{\doi}[1]{doi: #1}\else
  \providecommand{\doi}{doi: \begingroup \urlstyle{rm}\Url}\fi

\bibitem[Benjamini and Hochberg(1995)]{benjamini1995}
Y.~Benjamini and Y.~Hochberg.
\newblock Controlling the false discovery rate: A practical and powerful
  approach to multiple testing.
\newblock \emph{Journal of the Royal Statistical Society Series B: Statistical
  Methodology}, 57:\penalty0 289--300, 1995.
\newblock \doi{10.1111/J.2517-6161.1995.TB02031.X}.

\bibitem[Benjamini and Yekutieli(2001)]{benjamini2001}
Y.~Benjamini and D.~Yekutieli.
\newblock The control of the false discovery rate in multiple testing under
  dependency.
\newblock \emph{The Annals of Statistics}, 29\penalty0 (4):\penalty0
  1165--1188, 2001.
\newblock ISSN 00905364, 21688966.
\newblock URL \url{http://www.jstor.org/stable/2674075}.

\bibitem[Bissiri et~al.(2016)Bissiri, Holmes, and Walker]{bissiri2016}
P.~G. Bissiri, C.~Holmes, and S.~G. Walker.
\newblock A general framework for updating belief distributions.
\newblock \emph{Journal of the Royal Statistical Society Series B: Statistical
  Methodology}, 78\penalty0 (5):\penalty0 1103--1130, 2016.
\newblock \doi{10.1111/rssb.12158}.

\bibitem[Cella and Martin(2022)]{cella2022}
L.~Cella and R.~Martin.
\newblock Direct and approximately valid probabilistic inference on a class of
  statistical functionals.
\newblock \emph{International Journal of Approximate Reasoning}, 151:\penalty0
  205--224, 2022.
\newblock \doi{10.1016/j.ijar.2022.09.011}.

\bibitem[Chung and Chung(2023)]{chung2023}
W.~T. Chung and K.~C. Chung.
\newblock The use of the e-value for sensitivity analysis.
\newblock \emph{Journal of Clinical Epidemiology}, 163:\penalty0 92--94, 2023.
\newblock \doi{10.1016/j.jclinepi.2023.09.014}.

\bibitem[de~Heide et~al.(2020)de~Heide, Kirichenko, Gr\"{u}nwald, and
  Mehta]{heide2020}
R.~de~Heide, A.~Kirichenko, P.~Gr\"{u}nwald, and N.~Mehta.
\newblock Safe-{B}ayesian generalized linear regression.
\newblock In \emph{Proceedings of the Twenty Third International Conference on
  Artificial Intelligence and Statistics}, volume 108 of \emph{Proceedings of
  Machine Learning Research}, pages 2623--2633. PMLR, 26--28 Aug 2020.

\bibitem[Dey et~al.(2024)Dey, Martin, and Williams]{dey2024}
N.~Dey, R.~Martin, and J.~P. Williams.
\newblock Generalized universal inference on risk minimizers, 2024.
\newblock URL \url{https://arxiv.org/abs/2402.00202}.

\bibitem[Gr{\"u}nwald and van Ommen(2017)]{grunwald2017}
P.~Gr{\"u}nwald and T.~van Ommen.
\newblock Inconsistency of {B}ayesian inference for misspecified linear models,
  and a proposal for repairing it.
\newblock \emph{Bayesian Analysis}, 12\penalty0 (4):\penalty0 1069 -- 1103,
  2017.
\newblock \doi{10.1214/17-BA1085}.

\bibitem[Gr\"unwald et~al.(2024)Gr\"unwald, de~Heide, and Koolen]{grunwald2024}
P.~Gr\"unwald, R.~de~Heide, and W.~Koolen.
\newblock Safe testing.
\newblock \emph{Journal of the Royal Statistical Society Series B: Statistical
  Methodology}, 86\penalty0 (5):\penalty0 1091--1128, 2024.
\newblock \doi{10.1093/jrsssb/qkae011}.

\bibitem[Gr\"{u}nwald and Mehta(2020)]{grunwald2020}
P.~D. Gr\"{u}nwald and N.~A. Mehta.
\newblock Fast rates for general unbounded loss functions: from {ERM} to
  generalized {B}ayes.
\newblock \emph{Journal of Machine Learning Research}, 21\penalty0
  (56):\penalty0 1--80, 2020.

\bibitem[Hedayat et~al.(2015)Hedayat, Wang, and Xu]{hedayat2015}
S.~Hedayat, J.~Wang, and T.~Xu.
\newblock Minimum clinically important difference in medical studies.
\newblock \emph{Biometrics}, 71:\penalty0 33--41, 2015.

\bibitem[Holmes and Walker(2017)]{holmes2017}
C.~C. Holmes and S.~G. Walker.
\newblock Assigning a value to a power likelihood in a general {B}ayesian
  model.
\newblock \emph{Biometrika}, 104\penalty0 (2):\penalty0 497--503, 03 2017.
\newblock ISSN 0006-3444.
\newblock \doi{10.1093/biomet/asx010}.

\bibitem[Huber(1981)]{huber1981}
P.~J. Huber.
\newblock \emph{Robust Statistics}.
\newblock Wiley Series in Probability and Statistics. John Wiley \& Sons, Inc,
  1981.

\bibitem[Hudson et~al.(2021)Hudson, Carone, and Shojaie]{hudson2021}
A.~Hudson, M.~Carone, and A.~Shojaie.
\newblock Inference on function-valued parameters using a restricted score
  test, 2021.
\newblock URL \url{https://arxiv.org/abs/2105.06646}.

\bibitem[Ignatiadis et~al.(2023)Ignatiadis, Wang, and Ramdas]{ignatiadis2023}
N.~Ignatiadis, R.~Wang, and A.~Ramdas.
\newblock E-values as unnormalized weights in multiple testing, 2023.
\newblock URL \url{https://arxiv.org/pdf/2204.12447}.

\bibitem[Koenker and Bassett(1978)]{koenker1978}
R.~Koenker and G.~Bassett, Jr.
\newblock Regression quantiles.
\newblock \emph{Econometrica}, 46\penalty0 (1):\penalty0 33--50, 1978.
\newblock ISSN 0012-9682,1468-0262.

\bibitem[Lyddon et~al.(2019)Lyddon, Holmes, and Walker]{lyddon2019}
S.~P. Lyddon, C.~C. Holmes, and S.~G. Walker.
\newblock {General Bayesian updating and the loss-likelihood bootstrap}.
\newblock \emph{Biometrika}, 106\penalty0 (2):\penalty0 465--478, 03 2019.
\newblock ISSN 0006-3444.
\newblock \doi{10.1093/biomet/asz006}.

\bibitem[Martin and Syring(2022)]{martin2022}
R.~Martin and N.~Syring.
\newblock Direct {G}ibbs posterior inference on risk minimizers: Construction,
  concentration, and calibration.
\newblock In \emph{Advancements in Bayesian Methods and Implementation},
  volume~47 of \emph{Handbook of Statistics}, pages 1--41. Elsevier, 2022.
\newblock \doi{10.1016/bs.host.2022.06.004}.

\bibitem[Nguyen and Gupta(2023)]{nguyen2023}
H.~D. Nguyen and M.~Gupta.
\newblock Finite sample inference for empirical {B}ayesian methods.
\newblock \emph{Scandinavian Journal of Statistics}, 50\penalty0 (4):\penalty0
  1616--1640, 2023.
\newblock \doi{10.1111/sjos.12643}.

\bibitem[Ramdas and Wang(2024)]{ramdas2024}
A.~Ramdas and R.~Wang.
\newblock Hypothesis testing with e-values, 2024.
\newblock URL \url{https://arxiv.org/abs/2410.23614}.

\bibitem[Shafer et~al.(2011)Shafer, Shen, Vereshchagin, and Vovk]{shafer2011}
G.~Shafer, A.~Shen, N.~Vereshchagin, and V.~Vovk.
\newblock Test martingales, {B}ayes factors and p-values.
\newblock \emph{Statistical Science}, 26\penalty0 (1):\penalty0 84--101, 2011.
\newblock ISSN 08834237.
\newblock URL \url{http://www.jstor.org/stable/23059157}.

\bibitem[Syring and Martin(2017)]{syring2017}
N.~Syring and R.~Martin.
\newblock Gibbs posterior inference on the minimum clinically important
  difference.
\newblock \emph{J. Statist. Plann. Inference}, 187:\penalty0 67--77, 2017.
\newblock ISSN 0378-3758.

\bibitem[Syring and Martin(2019)]{syring2019}
N.~Syring and R.~Martin.
\newblock Calibrating general posterior credible regions.
\newblock \emph{Biometrika}, 106\penalty0 (2):\penalty0 479--486, 2019.
\newblock \doi{10.1093/biomet/asy054}.

\bibitem[van Erven et~al.(2015)van Erven, Gr\"{u}nwald, Mehta, Reid, and
  Williamson]{vanErvan2015}
T.~van Erven, P.~D. Gr\"{u}nwald, N.~A. Mehta, M.~D. Reid, and R.~C.
  Williamson.
\newblock Fast rates in statistical and online learning.
\newblock \emph{Journal of Machine Learning Research}, 16\penalty0
  (54):\penalty0 1793--1861, 2015.

\bibitem[Vovk and Wang(2021)]{vovk2021}
V.~Vovk and R.~Wang.
\newblock E-values: Calibration, combination and applications.
\newblock \emph{Annals of Statistics}, 49\penalty0 (3), 2021.
\newblock \doi{10.1214/20-AOS2020}.

\bibitem[Vovk and Wang(2023)]{vovk2023}
V.~Vovk and R.~Wang.
\newblock Confidence and discoveries with e-values.
\newblock \emph{Statistical Science}, 38\penalty0 (2):\penalty0 329 -- 354,
  2023.
\newblock \doi{10.1214/22-STS874}.

\bibitem[Wang and Ramdas(2022)]{wang2022}
R.~Wang and A.~Ramdas.
\newblock False discovery rate control with e-values.
\newblock \emph{Journal of the Royal Statistical Society Series B: Statistical
  Methodology}, 84\penalty0 (3):\penalty0 822--852, 01 2022.
\newblock ISSN 1369-7412.
\newblock \doi{10.1111/rssb.12489}.

\bibitem[Wasserman et~al.(2020)Wasserman, Ramdas, and
  Balakrishnan]{wasserman2020}
L.~Wasserman, A.~Ramdas, and S.~Balakrishnan.
\newblock Universal inference.
\newblock \emph{Proceedings of the National Academy of Sciences}, 117\penalty0
  (29):\penalty0 16880--16890, 2020.

\end{thebibliography}
\end{document}